\documentclass[letterpaper,10pt,conference]{ieeeconf}
\IEEEoverridecommandlockouts
\usepackage[T1]{fontenc}
\usepackage{microtype}
\usepackage{amsmath,amssymb,mathtools,bm}
\usepackage{graphicx,booktabs,array,float,balance}
\usepackage{cite,url}
\usepackage[colorlinks=true,linkcolor=blue!55!black,citecolor=blue!55!black,urlcolor=blue!55!black]{hyperref}
\usepackage{xcolor,colortbl}
\makeatletter
\def\endtable{\end@float}
\def\endfigure{\end@float}
\def\@begintheorem#1#2{\topsep\z@\normalfont\trivlist
  \item[\hskip\labelsep{\bfseries #1\ #2:}]}
\def\@opargbegintheorem#1#2#3{\topsep\z@\normalfont\trivlist
  \item[\hskip\labelsep{\bfseries #1\ #2\ (#3):}]}
\makeatother
\usepackage[caption=false,font=footnotesize]{subfig}
\usepackage[ruled,vlined,linesnumbered]{algorithm2e}
\usepackage[nameinlink]{cleveref}
\definecolor{oursmint}{HTML}{E2F1EA}
\definecolor{losssalmon}{HTML}{F9E8E4}
\definecolor{controlblue}{HTML}{0D47A1}
\definecolor{baselinered}{HTML}{BF360C}
\newtheorem{problem}{Problem}
\newtheorem{assumption}{Assumption}
\newtheorem{theorem}{Theorem}
\newtheorem{proposition}{Proposition}
\crefname{section}{Sec.}{Secs.}
\Crefname{section}{Sec.}{Secs.}
\crefname{subsection}{Sec.}{Secs.}
\Crefname{subsection}{Sec.}{Secs.}
\crefname{subsubsection}{Sec.}{Secs.}
\Crefname{subsubsection}{Sec.}{Secs.}
\crefname{figure}{Fig.}{Figs.}
\Crefname{figure}{Fig.}{Figs.}
\crefname{table}{Tab.}{Tabs.}
\Crefname{table}{Tab.}{Tabs.}
\crefname{algorithm}{Alg.}{Algs.}
\Crefname{algorithm}{Alg.}{Algs.}
\crefname{definition}{Def.}{Defs.}
\Crefname{definition}{Def.}{Defs.}
\crefname{lemma}{Lem.}{Lems.}
\Crefname{lemma}{Lem.}{Lems.}
\crefname{proposition}{Prop.}{Props.}
\Crefname{proposition}{Prop.}{Props.}
\crefname{theorem}{Thm.}{Thms.}
\Crefname{theorem}{Thm.}{Thms.}
\crefname{corollary}{Cor.}{Cors.}
\Crefname{corollary}{Cor.}{Cors.}
\crefname{assumption}{Assum.}{Assums.}
\Crefname{assumption}{Assum.}{Assums.}
\crefname{remark}{Rem.}{Rems.}
\Crefname{remark}{Rem.}{Rems.}
\crefname{problem}{Prob.}{Probs.}
\Crefname{problem}{Prob.}{Probs.}
\newcommand{\R}{\mathbb R}
\newcommand{\E}{\mathbb E}
\newcommand{\Prob}{\mathbb P}
\newcommand{\F}{\mathcal F}
\newcommand{\A}{\mathcal A}
\newcommand{\K}{\mathcal K}
\newcommand{\U}{\mathcal U}
\newcommand{\ind}{\mathbf 1}

\newcommand{\pos}[1]{[#1]_+}

\title{\LARGE\bf Belief-Informed Hybrid Control with Almost-Sure Target-Set Convergence}
\author{Clinton Enwerem\textsuperscript{1}, Saleh Kemal\textsuperscript{1}, John S. Baras\textsuperscript{1}, Calin Belta\textsuperscript{1}%
\thanks{\textsuperscript{1}The authors are with the Institute for Systems Research, University of Maryland, College Park, MD, USA. Emails:~\texttt{\char`\{enwerem, skemal, baras, calin\char`\}@umd.edu}.}
}
\begin{document}
\setlength{\abovedisplayskip}{3pt}
\setlength{\belowdisplayskip}{3pt}
\setlength{\abovedisplayshortskip}{0pt}
\setlength{\belowdisplayshortskip}{3pt}
\setlength{\jot}{2pt}
\maketitle
\flushbottom
\predisplaypenalty=0
\allowdisplaybreaks[1]
\begin{abstract}
Controlling a hybrid system to a target set under parameter uncertainty can require informative actions that temporarily drive the system state away from the specified set. We propose a belief-informed dual-control algorithm that combines belief-space receding-horizon selection with an expected-decrease constraint on a nonnegative target-set progress function. To accommodate exploratory deviations, a scalar controller state bounds the constraint's cumulative slack. Our algorithm's two-step lookahead selection uses predicted observations to update the parameter belief before evaluating the subsequent action's admissibility. Under distance-comparison bounds, correct conditional prediction, and recursive feasibility, we prove almost-sure target-set convergence at decision times and bound both the sum of expected progress-function values and the expected neighborhood-entry time. Upper confidence bounds extend these convergence guarantees to bounded model samples with summable error probabilities. In planar regulation with an unknown control direction, we verify recursive feasibility: the proposed two-step selection reduces the Euclidean state norm below 0.01 within 11 decisions for either sign, whereas a myopic one-step selection loses admissibility immediately after zero input. In a simulated bimanual assembly task, our algorithm's one-step implementation uses clearance feedback to complete the assembly under nominal friction, yielding a 96.4\% lower infinity-norm relative-position error than the reference-tracking baseline at the method's completion time. At lower friction, its posterior conditioning successfully detects an empty admissible set, whereas a fixed-prior alternative admits an action that violates the conditional decrease constraint.
Project page: \url{https://clintonenwerem.com/belief-hybrid-control/}.

\end{abstract}

\section{Introduction}\label{sec:introduction}
Consider the bimanual assembly task in \cref{fig:example}, where a robot with anthropomorphic end-effectors must acquire two free-body parts, align and insert one into the other, and withdraw its end-effectors without perturbing the assembly. Continuous robot and object motion combines with contact- and pose-triggered mode transitions, giving a hybrid control system~\cite{goebel2012,lindemann2021}. Unknown friction affects the response to control inputs, while object motion and contact forces inform its estimation~\cite{enwerem2026var}. A probing motion can reveal slip but also displace a part from its desired pose. Requiring immediate progress can therefore exclude actions that improve subsequent decisions. We seek a feedback-action selector that anticipates these physical and informational effects while ensuring convergence to a target set satisfying task requirements. This coupling of regulation and identification also arises in autonomous driving with uncertain surrounding-vehicle models~\cite{nair2022} and industrial process control~\cite{larsson2015}.
\begin{figure}[t]
\centering
\includegraphics[width=\linewidth]{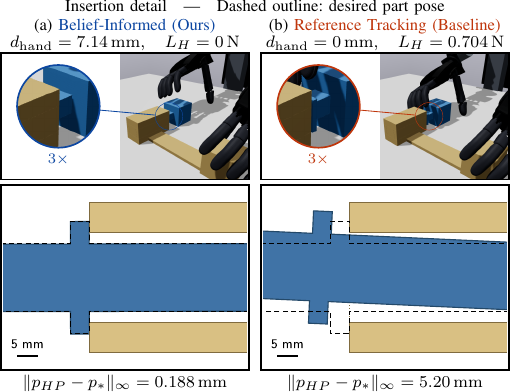}
\caption{\textbf{Part Assembly Outcomes.} We compare controllers on a contact-rich assembly task in which a bimanual robot with anthropomorphic end-effectors must align and insert a key-like part into a close-fitting housing (or socket) while accommodating uncertainty in the part--housing interaction. From the same initial state at \(\theta=0.7\), our belief-informed control algorithm completes the assembly with 96.4\% lower infinity-norm position error than a reference-tracking baseline. Upper panels show task endpoints with magnified insets. Lower panels show the key/socket sidewalls at a common scale together with dashed lines marking the desired part pose. Here \(p_{HP}\) and \(p_*\) denote the current and desired part-origin positions in the housing frame, \(d_{\rm hand}\) is the minimum hand--assembly clearance, and \(L_H\) is the summed magnitude of hand--housing contact forces.}
\label{fig:example}
\end{figure}

\subsection{Related Work}\label{sec:related-work}
In stochastic systems, control inputs exert a dual effect: they drive the system toward a target while reducing uncertainty that affects future decisions~\cite{barshalom1974}. Dual and adaptive predictive control methods address this coupling by anticipating future observations and parameter-estimation error~\cite{mesbah2018,klenske2016,heirung2017}, but at substantial computational cost. Tractable belief-space methods therefore introduce approximations such as nominal observations~\cite{platt2010}, Gaussian beliefs~\cite{vandenberg2012}, or localized sequential action optimization~\cite{nishimura2021} in place of full belief-state dynamic programming~\cite{smallwood1973,kaelbling1998}.

Uncertainty alone, however, does not capture systems that also undergo discrete mode transitions. Hybrid stability theory provides the corresponding framework for continuous flows, discrete jumps, and target sets~\cite{goebel2012}, encompassing switched systems~\cite{chatterjee2006}, rigid-body impacts~\cite{posa2016}, and stochastic recurrence~\cite{teel2013,teel2014}. Control Lyapunov functions and receding-horizon methods~\cite{primbs1999,mayne2000,scokaert1999} then provide tools for enforcing progress across such dynamics. Recent work extends these ideas to belief evolution in POMDPs~\cite{ahmadi2021}, bringing hybrid stability and stochastic estimation into a common setting.

Belief-space control functions regulate both task progress and estimation uncertainty, coupling information gathering with goal reaching and safety~\cite{vahs2026,enwerem2026}. We build on this perspective by defining a progress function over the observed hybrid state and using the parameter posterior to predict the endpoints of feedback actions through both flows and jumps. A conditional expected-decrease constraint allows temporary increases through bounded cumulative slack while preserving the convergence argument for recursively feasible selectors. Posterior-conditioned lookahead then ranks actions by possible next-observation outcomes, while hybrid barrier conditions~\cite{lindemann2021} restrict the candidate actions to those satisfying safety requirements.

\subsection{Contributions and Outline}\label{sec:contributions}
Our contributions are threefold. First, we formulate hybrid action selection with bounded cumulative slack and posterior-dependent two-step lookahead (\cref{sec:plant,sec:selection,sec:information}, \cref{alg:selector}). Second, under distance comparison, correct prediction, and recursive feasibility, we prove almost-sure target-set convergence and bound the sum of expected progress-function values and expected neighborhood-entry time (\cref{thm:exact}), with a sampled extension (\cref{prop:sample}). Third, we verify the convergence hypotheses and informative lookahead in planar regulation (\cref{sec:planar}), and evaluate one-step withdrawal feedback and posterior-dependent admissibility in bimanual assembly (\cref{sec:assembly}).

\section{Hybrid Plant and Belief Model}\label{sec:plant}
\subsection{Flows, Events, and Feedback Actions}\label{sec:hybrid-dynamics}
Following Goebel, Sanfelice, and Teel~\cite[Sec.~1.4]{goebel2012}, we use explicit modes, flow sets, guards, and resets. Let \(s=(x,q)\in S=X\times\mathcal Q\) denote the hybrid state, where \(x\in X\subseteq\R^n\) contains physical coordinates, controller memory, and clocks, and \(q\) is a mode in the finite set \(\mathcal Q\). The uncertain parameter \(\theta\in\Theta\subseteq\R^p\) is fixed throughout each closed-loop execution, and the mode-dependent control input satisfies \(u\in\R^{m_q}\). For mode \(q\), the plant flows according to
\begin{equation}\label{eq:flow}
\dot x=f_q(x,u,\theta),\quad x\in C_q(\theta),\quad u\in U_q(x).
\end{equation}
For a permitted edge \(e=(q,r)\), the guard \(D_e(\theta)\) and reset \(R_e\) specify the jump
\begin{equation}\label{eq:jump}
x^+=R_e(x,\nu,\theta),\quad q^+=r,\quad x\in D_e(\theta),
\end{equation}
where \(\nu\in\R^{p_e}\) is a reset input if the event is controlled. Commanded supervisory changes and forced contact events correspond to different edges. An event priority rule or probability law resolves simultaneous guards.

Let \(t_0=0\) and \(t_{k+1}=t_k+h_k\), where the action duration satisfies
\begin{equation}\label{eq:duration}
0<h_-\le h_k\le h_+<\infty \quad\text{almost surely}.
\end{equation}
At \(t_k\), we select a feedback action \(a_k\in\U(s_k)\) from a catalog of at most \(M\) elements. Each action specifies a feedback law that generates the control input \(u\) on \([t_k,t_{k+1})\), permitted mode switches, responses to forced events, and a stopping rule using interval observations. Fixed-period actions have \(h_k=h\); event-terminated actions stop at the first qualifying transition or time limit. We evaluate \(s_k\) after all events assigned to \(t_k\). For fixed \(\theta\), we model the response to \(a\) from \(s\) by the joint endpoint and observation kernel
\begin{equation}\label{eq:kernel}
P_\theta(ds',do\mid s,a).
\end{equation}
Here \(s'\) is the hybrid endpoint, including the mode and any clocks required by the model. The observation \(o\) contains the elapsed duration and interval observations such as contact outputs or event times. We assume measurable, complete trajectories with finitely many jumps in bounded physical time and use the plant's conditional transition law as \eqref{eq:kernel} in the convergence analysis.

\subsection{Parameter Belief and Observations}\label{sec:parameter-belief}
We infer \(\theta\) from the observed state, mode, and interval observations. Let \(\F_k\) contain observed states, selected feedback actions, and interval observations available before selecting \(a_k\), together with past computational random variables. With posterior \(b_k(d\theta)=\Prob(\theta\in d\theta\mid\F_k)\), we predict outcomes using
\begin{equation}\label{eq:predict}
P_b(ds',do\mid s,a)=\int_\Theta P_\theta(ds',do\mid s,a)b(d\theta).
\end{equation}
We assume \(P_\theta\) admits likelihood \(\ell_\theta(s',o\mid s,a)\), a probability mass or density relative to a common reference measure for each \((s,a)\). After observing \((s',o)\), we update \(b\) by Bayes' rule:
\begin{equation}\label{eq:bayes}
\tau(b,s,a,s',o)(d\theta)=
\frac{\ell_\theta(s',o\mid s,a)b(d\theta)}
{\int\ell_\vartheta(s',o\mid s,a)b(d\vartheta)}.
\end{equation}
We apply this update when its denominator is positive. By conditioning on the observed state as well as the interval observations, we include information about \(\theta\) contained in the physical response. Since \(\theta\) is fixed during an execution, we retain each sampled parameter value across prediction intervals. For numerical evaluation (\cref{sec:experiments}), we specialize to \(N_\theta\) deterministic parameter hypotheses \(\Theta=\{\theta_1,\ldots,\theta_{N_\theta}\}\). Under \(\theta_i\), we predict \(y=(s',o)=\Psi_i(s,a)\) and use the noiseless likelihood
\begin{equation}\label{eq:atomic-likelihood}
\ell_{\theta_i}(y\mid s,a)=\ind\{y=\Psi_i(s,a)\}.
\end{equation}
We retain the relative prior weights of matching hypotheses; a unique match receives unit posterior weight.

\section{Target-Set Control Problem}\label{sec:selection}
\subsection{Target Set and Progress Function}\label{sec:target}
Let \(\A\subseteq S\) be a target set in both continuous coordinates and mode. Given a metric \(d_S\), write \(d_\A(s)=\inf_{z\in\A}d_S(s,z)\). Coordinate scaling and rotation distance form part of the metric specification.
\begin{assumption}[Target Set Properties]\label{asm:anonemp}
The target set \(\mathcal{A}\) is nonempty and closed, and there exists a nonnegative measurable function \(V:S\to\R_{\ge0}\) satisfying
\begin{equation}\label{eq:comparison}
\alpha_1(d_\A(s))\le V(s)\le\alpha_2(d_\A(s))
\end{equation}
on the reachable domain, where \(\alpha_1,\alpha_2\) are continuous, strictly increasing, unbounded functions that vanish at zero (class \(\mathcal K_\infty\)). In particular, \(V(s)=0\) exactly when \(s\in\A\).
\end{assumption}
\begin{problem}[Target-Set Convergence]\label{prob:target-set}
Under \cref{asm:anonemp}, design a selector for which \(d_\A(s_k)\to0\) almost surely under the joint law of the prior, observations, and plant.
\end{problem}

\subsection{Admissible Actions and Expected Decrease}\label{sec:progress-allowance}
With decrease fraction \(c\in(0,1)\) and \(\rho=1-c\), we define the posterior expected endpoint value and required slack:
\begin{align}
\mu(s,b,a)&=\int V(s')P_b(ds',do\mid s,a),\label{eq:mu}\\
e(s,b,a)&=\pos{\mu(s,b,a)-\rho V(s)}.\label{eq:expense}
\end{align}
At decision \(k\), write \(\mu_k(a)=\mu(s_k,b_k,a)\) and \(e_k(a)=e(s_k,b_k,a)\). Here \([z]_+=\max\{z,0\}\), and \(e(s,b,a)\) is the nonnegative slack required when the predicted decrease is smaller than \(cV(s)\). An action with \(\mu(s,b,a)=V(s)\) requires slack \(cV(s)\), whereas an action with \(\mu(s,b,a)\le\rho V(s)\) requires none. Let \(B_0\ge0\) be a finite bound on cumulative slack, and let \(B_k=B_0-\sum_{j<k}\varepsilon_j\) denote the remaining slack at decision \(k\). With \(B\ge0\), the admissible action set is
\begin{equation}\label{eq:admissible}
\K(s,b,B)=\{a\in\U(s):e(s,b,a)\le B\}.
\end{equation}
To enforce the cumulative bound, we require \(\F_k\)-measurable action indices \(a_k\in\K(s_k,b_k,B_k)\) and set \(B_{k+1}=B_k-\varepsilon_k\), where \(\varepsilon_k=e(s_k,b_k,a_k)\). Admissibility and this slack update keep \(B_k\ge0\), giving \(\sum_k\varepsilon_k\le B_0\) on every trajectory. Under \eqref{eq:predict}, admissibility then implies
\begin{equation}\label{eq:drift}
\E[V(s_{k+1})\mid\F_k]\le\rho V(s_k)+\varepsilon_k.
\end{equation}
The decrease condition covers the flows and jumps over the complete action interval.

\begin{assumption}[Recursive Feasibility]\label{asm:feasible}
For every reachable decision history, including histories after observation and slack updates, the selector returns an admissible action with finite conditional expected endpoint value of \(V\). Its operation continues indefinitely on the modeled domain.
\end{assumption}
\Cref{asm:feasible} concerns the augmented state \((s,b,B)\). One sufficient construction restricts selection to finite-cost admissible actions whose successor augmented states belong almost surely to a controlled invariant set.

\subsection{Convergence Under Exact Prediction}\label{sec:exact-convergence}
The slack update offsets any permitted increase in the conditional expected value of \(V\). We apply the supermartingale argument underlying the Robbins--Siegmund theorem~\cite{robbins1971} to the nonnegative augmented function \(W_k=V(s_k)+B_k\).
\begin{theorem}[Almost-Sure Convergence]\label{thm:exact}
Assume well-defined evolution, the correct posterior kernel, \eqref{eq:comparison}, \cref{asm:feasible}, finite \(B_0\), and \(\E V(s_0)<\infty\). For every integer \(N\ge1\),
\begin{equation}\label{eq:bound}
c\sum_{k=0}^{N-1}\E V(s_k)\le\E V(s_0)+B_0.
\end{equation}
Consequently \(V(s_k)\to0\) almost surely and in mean and \(d_\A(s_k)\to0\) almost surely. For \(r>0\) and \(T_r=\inf\{k\ge0:d_\A(s_k)\le r\}\),
\begin{equation}\label{eq:entry}
\E T_r\le\frac{\E V(s_0)+B_0}{c\alpha_1(r)}.
\end{equation}
\end{theorem}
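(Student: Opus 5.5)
We work with the augmented process \(W_k=V(s_k)+B_k\), which is nonnegative because \(V\ge0\) and admissibility keeps \(B_k\ge0\). Everything follows from the one-step drift inequality \eqref{eq:drift}. That inequality holds because the correct posterior kernel \eqref{eq:predict} gives \(\E[V(s_{k+1})\mid\F_k]=\mu_k(a_k)\). Admissibility under \cref{asm:feasible} ensures this is finite and bounded by \(\rho V(s_k)+\varepsilon_k\), since \(\varepsilon_k=\pos{\mu_k(a_k)-\rho V(s_k)}\ge\mu_k(a_k)-\rho V(s_k)\).

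The first step is measurability: \(B_{k+1}=B_k-\varepsilon_k\) is \(\F_k\)-measurable, because \(a_k\), \(s_k\) and \(b_k\) are. Hence
\begin{equation*}
\E[W_{k+1}\mid\F_k]\le\rho V(s_k)+\varepsilon_k+B_k-\varepsilon_k=W_k-cV(s_k).
\end{equation*}
So \(W_k\) is a nonnegative supermartingale with an explicit decrement. This is the Robbins--Siegmund setting with zero perturbation terms.

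The second step proves \eqref{eq:bound}. We take expectations and telescope from \(k=0\) to \(N-1\), which gives \(\E W_N+c\sum_{k<N}\E V(s_k)\le\E W_0=\E V(s_0)+B_0\). This is finite by hypothesis, and we then drop \(\E W_N\ge0\). Integrability of each \(W_k\) follows inductively from the same inequality, so no conditional expectation is infinite. Letting \(N\to\infty\) gives \(\sum_k\E V(s_k)<\infty\). By monotone convergence, \(\E\sum_k V(s_k)<\infty\), so \(\sum_k V(s_k)<\infty\) almost surely. Therefore \(V(s_k)\to0\) almost surely, and also in mean, since \(\E V(s_k)\to0\) as the terms of a convergent series. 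The lower comparison bound in \eqref{eq:comparison} gives \(\alpha_1(d_\A(s_k))\le V(s_k)\to0\). Because \(\alpha_1\) is continuous, strictly increasing and vanishes only at zero, this forces \(d_\A(s_k)\to0\) almost surely. This uses only that the \(s_k\) lie in the reachable domain where \eqref{eq:comparison} holds.

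The third step proves the entry-time bound \eqref{eq:entry}. For \(k<T_r\) we have \(d_\A(s_k)>r\), so \(V(s_k)\ge\alpha_1(d_\A(s_k))\ge\alpha_1(r)>0\). Hence \(\alpha_1(r)\ind\{T_r>k\}\le V(s_k)\). Summing over \(k\) and using \(\E T_r=\sum_{k\ge0}\Prob(T_r>k)\) gives \(\alpha_1(r)\E T_r\le\sum_k\E V(s_k)\le(\E V(s_0)+B_0)/c\).

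The main delicate point is the drift step rather than the martingale algebra. We must justify that the paper's \(\F_k\)-conditional law of \((s_{k+1},o_k)\) equals \(P_{b_k}(\cdot\mid s_k,a_k)\), which is the ``correct posterior kernel'' hypothesis. We also need the selector's computational randomness to be independent of \(\theta\) given the history, so that conditioning on it does not bias the posterior. Finally, \(\varepsilon_k\) and \(B_{k+1}\) must be \(\F_k\)-measurable, so the slack cancels exactly inside the conditional expectation.
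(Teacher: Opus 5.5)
Your proposal is correct and follows the paper's route: the nonnegative augmented function \(W_k=V(s_k)+B_k\) with decrement \(\E[W_{k+1}\mid\F_k]\le W_k-cV(s_k)\), telescoping and dropping \(\E W_N\ge0\), monotone convergence for almost-sure and mean convergence, the lower comparison bound for distance convergence, and summing \(\alpha_1(r)\ind\{T_r>k\}\le V(s_k)\) for the entry-time bound. Your extra remarks on \(\F_k\)-measurability of \(\varepsilon_k\), on integrability of \(W_k\), and on why \(\E[V(s_{k+1})\mid\F_k]=\mu_k(a_k)\) spell out steps the paper leaves implicit.
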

\begin{proof}
With \(W_k=V(s_k)+B_k\ge0\), \eqref{eq:drift} and the update of \(B_k\) give
\begin{equation}\label{eq:supermart}
\E[W_{k+1}\mid\F_k]\le W_k-cV(s_k).
\end{equation}
Taking expectations and summing over \(k=0,\ldots,N-1\) gives \eqref{eq:bound} because \(W_N\ge0\). Monotone convergence gives \(\E\sum_kV(s_k)\le(\E V(s_0)+B_0)/c\). This nonnegative series is finite almost surely, so \(V(s_k)\to0\) almost surely; summability of its expectations also implies convergence in mean. The lower comparison in \eqref{eq:comparison} gives convergence in distance. For each \(k<T_r\), \(V(s_k)\ge\alpha_1(r)\). Sum these inequalities and take expectations to obtain \eqref{eq:entry}.
\end{proof}

By \eqref{eq:duration}, the expected physical entry time also satisfies \(\E[t_{T_r}]\le h_+\E[T_r]\). Writing \(s(t,j)\) for the hybrid trajectory at physical time \(t\) after \(j\in\mathbb N_0\) jumps, if a continuous increasing function \(\chi\), with \(\chi(0)=0\), also bounds the maximum excursion within each interval by
\begin{equation}\label{eq:intersample}
\sup_{(t,j):t\in[t_k,t_{k+1})}d_\A(s(t,j))
\le\chi(V(s_k)+\varepsilon_k),
\end{equation}
then \(V(s_k)\to0\) and \(\varepsilon_k\to0\) imply convergence throughout physical time. 

\section{Belief-Informed Action Selection}\label{sec:information}
The expected-decrease constraint defines admissible actions but does not rank their information value. Actions requiring the same slack may yield observations supporting different subsequent feedback actions, so we condition the second decision on each predicted outcome. Let \(L(s,a,s')\ge0\) be a stage cost and \(\gamma\ge0\). For \(a\in\K(s,b,B)\), set \(B'=B-e(s,b,a)\). We define the second-stage cost-to-go \(H\) and two-step objective \(J_2\) by
\begin{align}
H(s',b',B')={}&\min_{a'\in\K(s',b',B')}\int\!\bigl[L(s',a',s'')\nonumber\\
&+\gamma V(s'')\bigr]P_{b'}(ds'',do'\mid s',a'),\label{eq:H}\\
J_2(s,b,B;a)={}&\int\!\bigl[L(s,a,s')\nonumber\\
&+H(s',\tau(b,s,a,s',o),B')\bigr]\nonumber\\
&\hspace{25mm}P_b(ds',do\mid s,a).\label{eq:J}
\end{align}
Because \(H\) uses the updated posterior, the second action's cost and admissibility depend on the first action's predicted observation, as in dual control~\cite{barshalom1974,klenske2016}. We assign \(H=+\infty\) when the second-stage admissible set is empty. We minimize \(J_2\) over admissible actions, resolve ties by catalog order, execute the first action, and replan after observing its outcome. \Cref{thm:exact} applies when the selector satisfies \cref{asm:feasible}. For comparison, the one-step selector minimizes \(\mu_k(a)\) over \(\K(s_k,b_k,B_k)\). Both examples use \(L=0\) and \(\gamma=1\).

\begin{algorithm}[t]
\caption{Belief-Space Receding-Horizon Action Selection}
\label{alg:selector}
\DontPrintSemicolon
\KwIn{\(s_k,b_k,B_k\); horizon \(H_p\in\{1,2\}\)}
\For{\(a\in\U(s_k)\)}{
  Compute \(\mu_k(a),e_k(a)\) using \eqref{eq:mu}--\eqref{eq:expense}\;
}
\(\K_k\leftarrow\{a\in\U(s_k):e_k(a)\le B_k\}\)\;
\lIf{\(\K_k=\varnothing\)}{\Return{no admissible action}}
\For{\(a\in\K_k\)}{
  \leIf{\(H_p=1\)}{\(J_k(a)\leftarrow\mu_k(a)\)}{\(J_k(a)\leftarrow J_2(s_k,b_k,B_k;a)\)}
}
\(\K_k^f\leftarrow\{a\in\K_k:J_k(a)<\infty\}\)\;
\lIf{\(\K_k^f=\varnothing\)}{\Return{no finite-cost admissible action}}
\(a_k\leftarrow\arg\min_{a\in\K_k^f}J_k(a)\)\;
\(B_{k+1}\leftarrow B_k-e_k(a_k)\)\;
Execute \(a_k\) until its stopping rule gives \(t_{k+1}\)\;
Observe \((s_{k+1},o_{k+1})\)\;
\(\Lambda_{k+1}\leftarrow\int\ell_\theta(s_{k+1},o_{k+1}\mid s_k,a_k)b_k(d\theta)\)\;
\lIf{\(\Lambda_{k+1}=0\)}{\Return{zero observation likelihood}}
\(b_{k+1}\leftarrow\tau(b_k,s_k,a_k,s_{k+1},o_{k+1})\)\;
\end{algorithm}

\subsection{Finite-Sample Action Selection}\label{sec:sampling}
When we estimate \eqref{eq:mu} from model samples, we use an upper confidence bound to avoid admitting actions through underestimated means. Assume \(0\le V(s')\le V_{\max}<\infty\). Conditional on \(\F_k\), draw \(Y_{a,i}=V(s'_{a,i})\) independently across \(i\) for each action from the model using an \(\F_k\)-measurable posterior \(\widehat b_k\); draws across actions may be dependent. The computational draws are conditionally independent of the true parameter and subsequent plant noise. Let \(d_{\rm TV}(\widehat b_k,b_k)\le\eta_k\), where \(d_{\rm TV}(b,\widehat b)=\sup_E|b(E)-\widehat b(E)|\) over measurable parameter sets, and let \(\eta_k\ge0\), \(\delta_k\in(0,1)\) be \(\F_k\)-measurable. With \(\overline Y_{a,n}=n^{-1}\sum_{i=1}^nY_{a,i}\), define
\begin{equation}\label{eq:ucb}
U_{k,a,n}=\overline Y_{a,n}+V_{\max}\!\left[\sqrt{\frac{\log(Mn(n+1)/\delta_k)}{2n}}+\eta_k\right].
\end{equation}
The posterior approximation changes the mean by at most \(V_{\max}\eta_k\). Hoeffding's inequality~\cite{hoeffding1963} and a union bound over actions and \(n\ge1\) give
\begin{equation}\label{eq:coverage}
\Prob\bigl(\mu_k(a)\le U_{k,a,n}\ \forall a,n\mid\F_k\bigr)\ge1-\delta_k,
\end{equation}
permitting data-dependent sample counts~\cite{howard2021}. We accept an action when \(\varepsilon_k=[U_{k,a,n}-\rho V(s_k)]_+\le B_k\) and set \(B_{k+1}=B_k-\varepsilon_k\). For two-step ranking, use this remaining slack in \eqref{eq:J} and exact or upper-bounded second-stage means conditional on each predicted state and posterior; recompute at the next executed decision.
\begin{proposition}[Summable Sampling Error]\label{prop:sample}
Under the physical and comparison conditions of \cref{thm:exact}, the sampling conditions above, and an accepted action returned after finitely many samples at every decision, suppose \(\sum_{k\ge0}\delta_k\le\Delta<\infty\) almost surely. Then
\begin{equation}\label{eq:samplebound}
c\sum_{k\ge0}\E V(s_k)\le\E V(s_0)+B_0+V_{\max}\Delta.
\end{equation}
The executed states satisfy \(d_\A(s_k)\to0\) almost surely, and \eqref{eq:entry} holds with \(V_{\max}\Delta\) added to its numerator.
\end{proposition}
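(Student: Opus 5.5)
We mirror the proof of \cref{thm:exact}, with $W_k=V(s_k)+B_k\ge0$, and treat confidence-bound failures as a perturbation of the drift inequality \eqref{eq:drift}. Let $G_k$ be the good event $\{\mu_k(a)\le U_{k,a,n}\ \forall a,n\}$. By \eqref{eq:coverage}, $\Prob(G_k^c\mid\F_k)\le\delta_k$. The accepted action $a_k$ and its sample count $n_k$ are random, because sample counts are data-dependent. We still have $\mu_k(a_k)\le U_{k,a_k,n_k}$ on $G_k$, since $G_k$ holds uniformly in $a$ and $n$. There $\mu_k(a_k)\le\rho V(s_k)+\varepsilon_k$ with $\varepsilon_k=[U_{k,a_k,n_k}-\rho V(s_k)]_+$.

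\textbf{Conditional drift with an error term.} We condition on $\F_k$ enlarged by the computational draws at decision $k$; call this $\F_k^+$. By assumption, these draws are conditionally independent of $\theta$ and of the subsequent plant noise. Hence the conditional law of $(s_{k+1},o_{k+1})$ given $\F_k^+$ is still $P_{b_k}(\cdot\mid s_k,a_k)$, so $\E[V(s_{k+1})\mid\F_k^+]=\mu_k(a_k)$. Because $0\le V\le V_{\max}$, we have $\mu_k(a_k)\le V_{\max}$ always. Therefore
\[
\E[V(s_{k+1})\mid\F_k^+]\le\rho V(s_k)+\varepsilon_k+V_{\max}\ind_{G_k^c}.
\]
The update $B_{k+1}=B_k-\varepsilon_k$ and acceptance $\varepsilon_k\le B_k$ keep $B_k\ge0$. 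Taking conditional expectation given $\F_k$ then gives
\[
\E[W_{k+1}\mid\F_k]\le W_k-cV(s_k)+V_{\max}\delta_k .
\]

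\textbf{Summation and conclusions.} Summing over $k<N$ uses $W_N\ge0$ and $\sum_k\delta_k\le\Delta$ almost surely, so $\E\sum_k\delta_k\le\Delta$. This yields $c\sum_{k<N}\E V(s_k)\le\E V(s_0)+B_0+V_{\max}\Delta$. Monotone convergence in $N$ gives \eqref{eq:samplebound}. The rest follows \cref{thm:exact} verbatim. The series $\sum_kV(s_k)$ has finite mean, so it is finite almost surely, and hence $V(s_k)\to0$ almost surely. The lower comparison bound $\alpha_1$ then gives $d_\A(s_k)\to0$. For the entry time, $\alpha_1(r)\ind\{k<T_r\}\le V(s_k)$, and summing this gives the modified \eqref{eq:entry}. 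As an alternative route, Robbins--Siegmund applied to $W_k$ with summable perturbations $V_{\max}\delta_k$ also gives almost-sure convergence directly.

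\textbf{Main obstacle.} The delicate step is justifying $\E[V(s_{k+1})\mid\F_k^+]=\mu_k(a_k)$ when $a_k$ depends on the random samples. This needs the stated conditional independence of the draws from $\theta$ and the plant noise, so that the true posterior $b_k$ remains the conditional law of $\theta$ given $\F_k^+$. It also needs the uniformity of $G_k$ over $n$, so that the data-dependent stopping does not break coverage. A secondary point is that $\F_{k+1}$ includes these computational variables, as the paper's definition of $\F_k$ allows. Consequently $b_{k+1}$ is the correct posterior for the next step, and we can iterate.
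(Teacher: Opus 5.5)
Your proposal is correct and follows essentially the same route as the paper: enlarge $\F_k$ by the computational draws so that $a_k$ and $n_k$ become measurable and $\E[V(s_{k+1})\mid\cdot\,]=\mu_k(a_k)$, bound the drift of $W_k=V(s_k)+B_k$ separately on the coverage event and its complement to get $\E[W_{k+1}\mid\F_k]\le W_k-cV(s_k)+V_{\max}\delta_k$, and then sum as in \cref{thm:exact}. The differences are cosmetic. You condition only on the draws at decision $k$, whereas the paper conditions on the full computational stream. You also write the failure term as $V_{\max}\ind_{G_k^c}$ added to the good-event bound, whereas the paper bounds $W_{k+1}\le V_{\max}+B_k$ on the complement.
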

\begin{proof}
Let \(\mathcal G_k\) extend \(\F_k\) by the full computational random stream. The selected action and sample count are \(\mathcal G_k\)-measurable, and conditional independence gives \(\E[V(s_{k+1})\mid\mathcal G_k]=\mu_k(a_k)\). On the coverage event in \eqref{eq:coverage}, the slack update gives \(\E[W_{k+1}\mid\mathcal G_k]\le W_k-cV(s_k)\); on its complement, \(W_{k+1}\le V_{\max}+B_k\). Thus
\begin{equation}\label{eq:sample-drift}
\E[W_{k+1}\mid\F_k]\le W_k-cV(s_k)+V_{\max}\delta_k.
\end{equation}
Summation and the argument of \cref{thm:exact} give the claims.
\end{proof}
For \(0<\Delta<1\), choosing \(\delta_k=\Delta/[(k+1)(k+2)]\) meets the summability condition.

\section{Numerical Experiments}\label{sec:experiments}
We use planar regulation to verify the hypotheses of \cref{thm:exact} and compare informative two-step selection with one-step selection. In bimanual assembly, we use one-step selection to examine withdrawal feedback and posterior-dependent action admissibility. Both examples evaluate the conditional means by summing over the finite deterministic hypotheses in \eqref{eq:atomic-likelihood}.
\subsection{Planar Regulation with Unknown Control Direction}\label{sec:planar}
Regulation with unknown control direction is a classical adaptive-control problem~\cite{nussbaum1983}. Consider the planar system
\begin{equation}\label{eq:planar-flow}
\dot x=\omega Jx+\theta u,\qquad
J=\begin{bmatrix}0&-1\\1&0\end{bmatrix},
\end{equation}
where \(x,u\in\R^2\), \(\omega=0.4\,\mathrm{rad/s}\), and the unknown control-direction sign \(\theta\in\{-1,+1\}\) is constant. With zero control input, the state rotates at constant radius. Radial feedback \(u=g_qx\) contracts or expands this radius according to the sign of \(\theta g_q\). Each feedback action selects one of four modes and applies its radial feedback law for one interval. The gains are
\begin{equation}\label{eq:planar-gains}
(g_{\mathrm{hold}},g_{\mathrm{test}},g_-,g_+)
=(0,-0.05,-0.5,0.5)\,\mathrm{s}^{-1}.
\end{equation}
At a commanded mode switch, \(x^+=x\). Let \(R(\varphi)=\exp(\varphi J)\) be the planar rotation matrix. For \(h=1\,\mathrm s\), the exact endpoint is \(x'=e^{\theta g_qh}R(\omega h)x\). The target is \(\A=\{(x,q):x=0\}\), and \(V(x,q)=\|x\|^2\), with \(\alpha_1(r)=\alpha_2(r)=r^2\). Since \(x^TJx=0\), the drift term preserves \(V\), and \(V'=e^{2\theta g_qh}V\). The noiseless hybrid endpoint \(s'=(x',q')\), together with the observed duration \(h\), supplies the likelihood in \eqref{eq:atomic-likelihood}. For \(x\ne0\), a nonzero gain identifies the control-direction sign through the observed change in radius. \Cref{fig:planar-scenario} shows the test interval and subsequent contracting trajectories.

\begin{figure}[t]
\centering
\includegraphics[width=.72\linewidth]{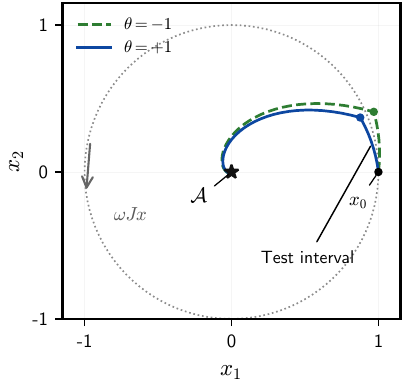}
\caption{\textbf{Planar Regulation with Unknown Control Direction.} In state coordinates \((x_1,x_2)\), \(\bigstar\) marks the target at the origin and the black dot marks \(x_0\). The dotted circle shows rotation under the drift \(\omega Jx\) with zero input. Two-step selection first applies the test action; the change in radius identifies the control-direction sign \(\theta\). We then select the contracting radial feedback gain. Blue solid and green dashed curves show the trajectories for \(\theta=+1\) and \(\theta=-1\), respectively; colored dots mark the test-interval endpoints.}
\label{fig:planar-scenario}
\end{figure}

We compare posterior-updating one- and two-step rules with a one-step fixed-prior rule, holding the admissibility constraint and catalog order fixed. Set \(x_0=(1,0)^T\), \(b_0(\pm1)=1/2\), \(c=0.1\), and \(B_0=0.12\). The test action gives \(\mu_{\rm test}=\cosh(0.1)\approx1.005\) and requires slack 0.105. High-gain actions require \(\cosh(1)-0.9\approx0.643>B_0\). The zero-input action minimizes the one-step mean but requires slack 0.1, leaving only 0.02 and no admissible next action.

With two-step selection, we first apply the test action. Its observation identifies \(\theta\), after which \(u=-0.5\theta x\) gives \(V(s_{k+1})=e^{-1}V(s_k)\). Its initial two-step objective is
\begin{equation}\label{eq:planar-score}
J_2(a_{\mathrm{test}})=e^{-1}\cosh(0.1)\approx0.37.
\end{equation}
The zero-input action has infinite second-stage cost because it leaves no admissible action at the next decision. After the test identifies the control-direction sign, the matching high-gain action requires zero slack at every subsequent decision. This verifies \cref{asm:feasible} along both possible closed-loop trajectories. The corresponding exact solution is
\begin{equation}\label{eq:planar-solution}
x_k=e^{-0.05\theta-0.5(k-1)}R(0.4k)x_0,\quad k\ge1.
\end{equation}
The exact solution satisfies \eqref{eq:bound}: the sum of expected progress-function values is \(\sum_{k\ge0}\E V(s_k)=1+\cosh(0.1)/(1-e^{-1})\approx2.59\), below the upper bound \((V(s_0)+B_0)/c=11.2\). Our simulated trajectories reproduce \eqref{eq:planar-solution} and first enter \(\|x\|\le0.01\) at decision 11 for both control directions.

For the prior and slack comparison, let \(p=b_0(-1)\) and \(\beta=B_0/\|x_0\|^2\). We vary \(p\in\{0.05,0.1,\ldots,0.95\}\) and \(\beta\in\{0,0.01,\ldots,0.2\}\), holding the plant, gains, interval length, initial state, and \(c\) fixed. We compare the first selected action for \(H_p=1\) and \(H_p=2\) (\cref{fig:planar}).
The one- and two-step selectors agree at 344 of 399 prior--slack pairs and differ at 55, all with \(0.5\le p\le0.7\). At 40 points, one-step selection chooses hold and two-step selection chooses test. At the remaining 15 points, hold is admissible but leaves no admissible next action, and the test action requires more than the available slack; consequently, no action has a finite two-step cost.

For example, at \(p=0.6\) and \(\beta=0.13\), the test requires slack 0.125. One-step selection chooses hold and leaves no admissible next action, whereas two-step selection chooses the test with cost 0.377. The observation identifies the contracting gain, which requires zero slack thereafter. Thus recursive feasibility also holds along both possible trajectories at this prior--slack pair.

\begin{figure}[t]
\centering
\begingroup
\setlength{\fboxrule}{0.8pt}\setlength{\fboxsep}{0pt}
\begin{minipage}[b]{.424\linewidth}
\centering
\subfloat[Embodiment\label{fig:scene-overview}]{%
\fbox{\includegraphics[trim=0bp 26bp 25bp 0bp,clip,width=\dimexpr\linewidth-2\fboxrule\relax]{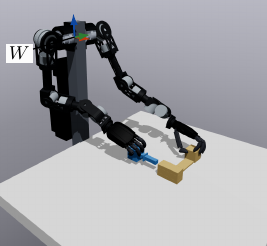}}}
\end{minipage}\hfill
\begin{minipage}[b]{.556\linewidth}
\centering
\subfloat[Assembly frames\label{fig:scene-detail}]{%
\fbox{\includegraphics[width=\dimexpr\linewidth-2\fboxrule\relax]{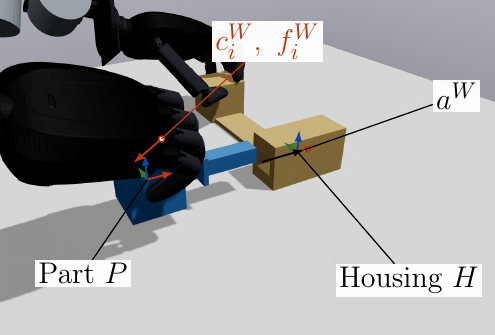}}}
\end{minipage}
\endgroup
\caption{\textbf{Assembly Task and Frames.} Panel (a) shows the bimanual RealHand A7 robot with L6 hands relative to the world frame \(W\). Panel (b) identifies the part frame \(P\), housing frame \(H\), and insertion axis \(a^W=R_H(1,0,0)^T\). At finger--part contact \(i\), the world-expressed contact position and force are \(c_i^W\) and \(f_i^W\). The part and housing poses are \(X_P=(R_P,p_P)\) and \(X_H=(R_H,p_H)\), respectively.}
\label{fig:scene}
\end{figure}
\begin{figure*}[t]
\centering
\subfloat[One-Step Selection, \(H_p=1\)]{%
\includegraphics[width=.39\textwidth]{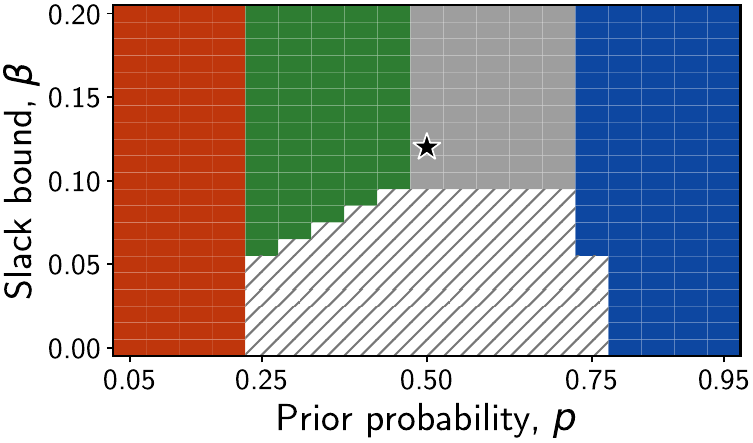}}\hfill
\subfloat[Two-Step Selection, \(H_p=2\)]{%
\includegraphics[width=.39\textwidth]{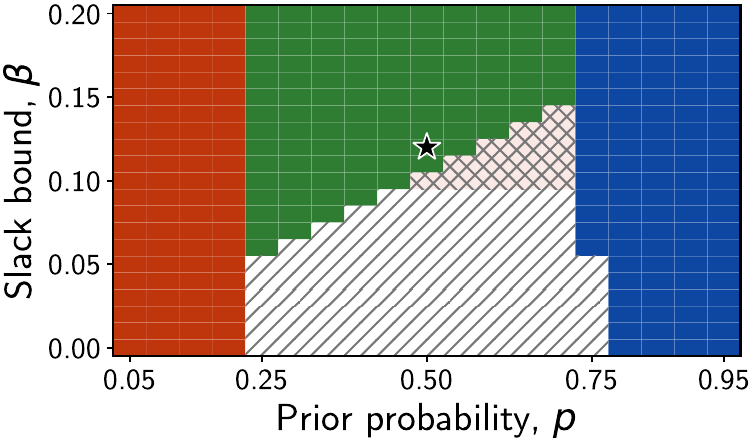}}\\[2pt]
\includegraphics[width=.60\textwidth]{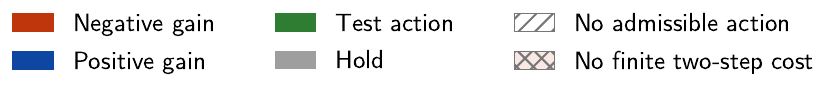}
\caption{\textbf{Action Selection Across Initial Beliefs and Slack Bounds.} The panels show the first actions selected over a 19-by-21 grid of prior probabilities \(p=b_0(-1)\) and normalized cumulative slack bounds \(\beta=B_0/\|x_0\|^2\). Colors identify the selected action. Diagonal hatching denotes an empty admissible set, while crosshatching denotes admissible first actions with no feasible second decision. Across panels, we hold the plant, gains, initial state, and expected-decrease constraint fixed. Stars mark the nominal case \((p,\beta)=(0.5,0.12)\).}
\label{fig:planar}
\end{figure*}
\subsection{Bimanual Assembly}\label{sec:assembly}
\subsubsection{State and Target}\label{sec:assembly-state}
We apply one-step selection to a bimanual robot comprising left and right variants of a RealHand A7 7-degrees-of-freedom (DoF) arm, each fitted with a matching RealHand L6 6-DoF robotic hand (\cref{fig:scene}). We simulate assembly in Drake~\cite{drake2019}, using hydroelastic contact~\cite{elandt2019}, the lagged approximation~\cite{castro2025}, and step \(h_{\rm sim}=1\,\mathrm{ms}\). We model the objects as free bodies under gravity and contact, with unknown friction \(\theta\) fixed during execution and shared across contact geometries. The observed continuous state contains robot configuration and velocity, object poses and twists, reference phase \(\sigma\in[0,1]\), elapsed time, guard/contact-loss clocks \(\tau_g,d_P,d_H\ge0\), dwell clock \(z\in[0,T_d]\), and contact outputs, with \(T_d=2\,\mathrm s\). The mode \(q=(q_c,\chi_H,\chi_P,\chi_L)\) combines \(q_c\in\{0,\ldots,9\}\) with binary housing-acquisition, part-acquisition, and lift flags. The flags start at zero and latch at their events; \(r(q)=9-q_c\).

\label{sec:assembly-target}
For object \(i\in\{P,H\}\), let \(X_i=(R_i,p_i)\in SE(3)\) be its world pose and \(v_i,\omega_i\in\R^3\) its world-expressed linear and angular velocities. Set \(p_{HP}=R_H^T(p_P-p_H)\), \(R_{HP}=R_H^TR_P\), \(p_*=(-0.0975,0,0)\,\mathrm m\), and \(e_R=\arccos((\operatorname{tr}R_{HP}-1)/2)\), corresponding to desired relative orientation \(I_3\). Let \(N_i,L_i,F_i\) denote hand-contact count, summed hand-contact-force magnitude, and upward table force. For object masses \(m_i\), define \(F_{\rm table}=F_P+F_H\), \(F_*=0.5(m_P+m_H)g_0\), with \(g_0=9.81\,\mathrm{m/s^2}\), and \(H_i=(N_i\ge2)\land(L_i\ge0.2\,\mathrm N)\). The minimum hand--assembly distance is \(d_{\rm hand}\). The completion predicate \(g(s)\) requires all residuals in \cref{tab:completion} to be nonpositive; \(g_{\rm ins}\) requires only its position and rotation conditions.
\begin{table}[t]
\centering
\caption{Completion Residuals and Normalization Scales. Values use SI units. For paired object rows, \(i=P,H\) in that order.}
\label{tab:completion}
\footnotesize
\setlength{\tabcolsep}{3pt}
\begin{tabular}{@{}lll@{}}\toprule
\(j\) & Residual \(c_j(s)\le0\) & Scale \(\lambda_j\)\\\midrule
1--3 & \(|p_{HP,j}-p_{*,j}|-\epsilon_{p,j}\) & \((0.1,0.05,0.05)_j\)\\
4 & \(e_R-0.06\) & 0.2\\
5, 6 & \(\|v_i\|-0.005\) & 0.1\\
7, 8 & \(\|\omega_i\|-0.05\) & 1\\
9 & \(0.005-d_{\rm hand}\) & 0.005\\
10 & \(F_*-F_{\rm table}\) & \(F_*\)\\\bottomrule
\multicolumn{3}{@{}l}{\(\epsilon_p=(0.003,0.003,0.002)\,\mathrm m\).}
\end{tabular}
\end{table}
The clock \(z\) is zero in modes 0--7. In dwell (\(q_c=8\)), each plant step sets \(z^+=\min\{T_d,z+h_{\rm sim}\}\) if the new state satisfies \(g\), and zero otherwise. At \(z^+=T_d\) with all three flags equal to 1, we enter terminal mode (\(q_c=9\)). We retain \(z=T_d\) there while \(g\) holds, and otherwise return to dwell with \(z^+=0\). The target is
\begin{equation}\label{eq:assemblytarget}
\A=\{s:g(s),\ z=T_d,\ r(q)=0,\ \chi_H=\chi_P=\chi_L=1\}.
\end{equation}

\subsubsection{Supervisory Modes}\label{sec:assembly-modes}
\Cref{tab:guards} defines forward transitions \(q_c\mapsto q_c+1\). Modes 1--5 require \(H_H\), and modes 2--5 require \(H_P\). Each contact-loss clock accumulates while its required predicate fails and resets otherwise; after 20~ms, we return to the corresponding acquisition mode, prioritizing housing recovery. We check recovery before updating \(\tau_g\), which accumulates while the forward guard and required contacts hold and resets otherwise. Forward transitions occur at 40~ms, and every mode transition resets \(\tau_g,d_P,d_H\). The transfer/alignment anchors are \(p_T=(-0.1675,0,0.06)\,\mathrm m\), \(p_A=(-0.1675,0,0)\,\mathrm m\), with component-wise tolerances \(\epsilon_T=(0.02,0.015,0.02)\,\mathrm m\), \(\epsilon_A=(0.02,0.008,0.008)\,\mathrm m\).

\begin{table}[t]
\centering
\caption{\scshape Forward Guards for Ordered Assembly Modes}
\label{tab:guards}
\footnotesize
\begin{tabular}{@{}>{\raggedright\arraybackslash}p{.06\linewidth}>{\raggedright\arraybackslash}p{.25\linewidth}>{\raggedright\arraybackslash}p{.58\linewidth}@{}}\toprule
\(q_c\) & Mode & Forward Predicate \\\midrule
0 & Housing acquisition & \(H_H\); housing speeds \(\le(0.05\,\mathrm{m/s},0.5\,\mathrm{rad/s})\)\\
1 & Part acquisition & \(H_P\land H_H\); part speeds \(\le(0.05\,\mathrm{m/s},0.5\,\mathrm{rad/s})\)\\
2 & Lift & \(p_{P,z}\ge-0.4441\,\mathrm m\), \(F_P\le0.12\,\mathrm N\)\\
3 & Transfer & \(|p_{HP}-p_T|\le\epsilon_T\), \(e_R\le0.12\,\mathrm{rad}\)\\
4 & Alignment & \(|p_{HP}-p_A|\le\epsilon_A\), \(e_R\le0.10\,\mathrm{rad}\)\\
5 & Insertion & \(g_{\rm ins}(s)\), \(F_{\rm table}\ge F_*\)\\
6 & Unloading & \(g_{\rm ins}(s)\), \(F_{\rm table}\ge F_*\), \(L_P,L_H\le0.01\,\mathrm N\)\\
7 & Withdrawal & Completion predicate \(g(s)\)\\\bottomrule
\end{tabular}
\end{table}

\begin{table}[t]
\centering
\caption{Controller and Simulation Parameters}
\label{tab:parameters}
\footnotesize
\begin{tabular}{@{}p{.57\linewidth}p{.35\linewidth}@{}}\toprule
Parameter & Value\\\midrule
Friction hypotheses; initial weights & \(\{0.35,0.7\}\); \((1/2,1/2)\)\\
\(c,B_0,H_p\) & \(10^{-3},0.1,1\)\\
Plant step, \(h_{\rm sim}\) & \(1\,\mathrm{ms}\)\\
Arm/hand stiffness (N\,m/rad) & 180 / 40\\
Arm/hand damping (N\,m\,s/rad) & 28 / 1.4\\

Action limits & \(0.1\,\mathrm s\), \(8\,\mathrm s\)\\
Guard/contact-loss times & \(40/20\,\mathrm{ms}\)\\
Terminal dwell, \(T_d\) & \(2\,\mathrm s\)\\
Insertion clearance per side & \(3\,\mathrm{mm}\)\\
Part/housing mass & \(0.12/0.35\,\mathrm{kg}\)\\
Point-contact stiffness & \(2\!\times\!10^5\,\mathrm{N/m}\)\\
Object hydroelastic modulus & \(10^7\,\mathrm{Pa}\)\\
\bottomrule
\end{tabular}
\end{table}
\subsubsection{Feedback Architecture}\label{sec:waypoints}
Contact trajectory optimization~\cite{posa2014} and whole-body predictive control~\cite{sleiman2021} provide motion-planning approaches. We use joint-limited inverse kinematics to generate 92 waypoints from prescribed palm poses and finger postures over a nominal reference duration \(T_{\rm ref}=33\,\mathrm s\), regularizing toward preceding configurations. We interpolate them with \(10\zeta^3-15\zeta^4+6\zeta^5\), where \(\zeta\in[0,1]\) is the normalized segment coordinate, to define \(q_a^{\rm ref}(\sigma)\). The independently commanded coordinates are \(q_a,v_a\in\R^{26}\), with joint limits \(q_a^-,q_a^+\); this subscript denotes actuated coordinates, whereas standalone \(a\) denotes a feedback action.

From lift through unloading, object-pose feedback corrects the right-arm reference toward \(X_E^*=X_P^{\rm ref}X_P^{-1}X_E\), where \(X_E\) is the current palm pose. We obtain \(X_P^{\rm ref}\) from the reference palm pose and the palm--part transform at acquisition, and compensate housing motion from transfer through unloading by premultiplying it by \(X_HX_{H,0}^{-1}\), with initial housing pose \(X_{H,0}\). We bound the regularized Jacobian correction by joint limits and \(0.2\,\mathrm{rad}\), then blend it smoothly with the waypoint reference. We generate the plant input \(u\in\R^{26}\) using joint impedance feedback~\cite{hogan1985} with an implicit-Euler velocity prediction, model torque limits, and the gains in \cref{tab:parameters}.

\subsubsection{Feedback Actions and Endpoint Prediction}\label{sec:assembly-actions}
\Cref{tab:actions} specifies the phase-rate multiplier \(v_\sigma\) and finger offset \(\delta_f\) for each feedback action. At each plant step, we advance normalized phase by \(v_\sigma h_{\rm sim}/T_{\rm ref}\), capped at 1 and the updated mode's endpoint. We preserve phase on forward transitions and restart acquisition on recovery. For tightening and relaxation, we add \(\delta_f\) to both hands' independently commanded pitch coordinates relative to the current waypoint command, then clip to joint limits; offsets do not accumulate. For clearance, once part--hand load is below 0.01~N, we translate the right-palm reference by \((-40,0,20)\,\mathrm{mm}\) in the housing frame and directly use the corrected right-arm command, retaining orientation, finger, and left-arm commands.

\begin{table}[t]
\centering
\caption{Feedback Action Catalog}
\label{tab:actions}
\footnotesize
\begin{tabular}{@{}lccc@{}}\toprule
Action & \(v_\sigma\) & \(\delta_f\) (rad) & Available \(q_c\)\\\midrule
Hold & 0 & 0 & 0--9\\
Advance & 1 & 0 & 0--7\\
Tighten & 0 & \(0.03\) & 1--5\\
Relax & 0 & \(-0.03\) & 1--5\\
Clearance & 1 & 0 & 6, 7\\
Clearance & 0 & 0 & 8, 9\\\bottomrule
\end{tabular}
\end{table}

In prediction and execution, we stop each action at the first supervisory transition or its duration limit. The limit is 8~s for advancement and clearance, 0.1~s for holding and finger adjustment, and 8~s for all available actions in dwell or terminal mode. Thus \(h_-=h_{\rm sim}\), \(h_+=8\,\mathrm s\). For each friction hypothesis, \(\Psi_i(s,a)\) composes plant/contact updates, feedback, reference and supervisory updates, and this stopping rule. The resulting state, clocks, phase, and contact outputs supply the likelihood in \eqref{eq:atomic-likelihood}.

\subsubsection{Progress Function}\label{sec:mode-rank}
We combine the mode rank \(r(q)\) with a nonnegative mode residual \(Z_q(x)\), specified below, to evaluate predicted endpoints:
\begin{equation}\label{eq:rank-progress}
V(x,q)=r(q)+\frac{Z_q(x)}{1+Z_q(x)},\qquad 0\le V<10.
\end{equation}
To measure deviation from the completion conditions, we define \(Z_\A\) from the normalized violations in \cref{tab:completion}, the dwell-time deficit, and unset acquisition/lift flags:
\begin{equation}\label{eq:terminal-residual}
Z_\A=\sum_{j=1}^{10}\left(\frac{[c_j(s)]_+}{\lambda_j}\right)^2
+\left(1-\frac z{T_d}\right)^2
+\sum_{i\in\{H,P,L\}}(1-\chi_i)^2.
\end{equation}
For each preceding mode, we obtain the endpoint joint reference \(q_a^q\) from the waypoint trajectory and define
\begin{equation}\label{eq:tracking-residual}
T_q=\frac1{26}\sum_{j=1}^{26}\left(\frac{q_{a,j}-q_{a,j}^q}{q_{a,j}^+-q_{a,j}^-}\right)^2
+\left(1-\min\left\{\frac{\tau_g}{0.04\,\mathrm s},1\right\}\right)^2.
\end{equation}
Using numerical values in SI units, set
\begin{align}
A_i={}&([2-N_i]_+/2)^2+([0.2-L_i]_+/0.2)^2\nonumber\\
 &+([\|v_i\|-0.05]_+/0.05)^2\nonumber\\
 &+([\|\omega_i\|-0.5]_+/0.5)^2,\label{eq:acquisition-residual}\\
Z_{\rm tr}(z_{\rm ref})={}&((p_{HP,x}+0.1675)/0.07)^2+(p_{HP,y}/0.05)^2\nonumber\\
 &+((p_{HP,z}-z_{\rm ref})/0.06)^2+(e_R/0.2)^2.\label{eq:transport-residual}
\end{align}
Here \(z_{\rm ref}\) is the desired part-origin height in the housing frame, expressed in metres. The mode residuals are
\begin{equation}\label{eq:mode-residuals}
Z_q=\begin{cases}
T_q+A_H,&q_c=0,\\
T_q+A_P,&q_c=1,\\
T_q+([-0.4441-p_{P,z}]_+/0.06)^2\\
\quad+([F_P-0.12]_+/1.2)^2,&q_c=2,\\
T_q+Z_{\rm tr}(0.06),&q_c=3,\\
T_q+Z_{\rm tr}(0),&q_c=4,\\
T_q+Z_\A,&q_c\in\{5,7\},\\
T_q+Z_\A+(L_P/0.2)^2+(L_H/0.2)^2,&q_c=6,\\
Z_\A,&q_c\in\{8,9\}.
\end{cases}
\end{equation}
Thus \(V^{-1}(0)=\A\). Every rank-reducing reset satisfies \(V(x^+,q^+)<r(q^+)+1\le r(q)\le V(x,q)\). We include recovery jumps in the action-endpoint prediction used in \eqref{eq:mu}.

\begin{figure}[t]
\centering
\begingroup
\setlength{\fboxrule}{0.8pt}\setlength{\fboxsep}{0pt}
\begin{minipage}[t]{.325\linewidth}\centering
\subfloat[Acquired]{\fbox{\includegraphics[trim=0bp 0bp 26bp 0bp,clip,width=\dimexpr\linewidth-2\fboxrule\relax]{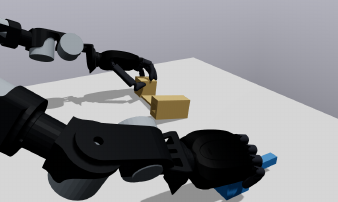}}}
\end{minipage}\hfill
\begin{minipage}[t]{.325\linewidth}\centering
\subfloat[Inserted]{\fbox{\includegraphics[trim=0bp 0bp 26bp 0bp,clip,width=\dimexpr\linewidth-2\fboxrule\relax]{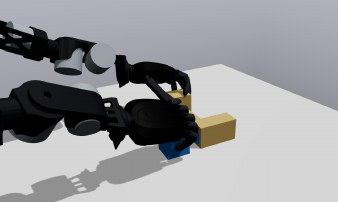}}}
\end{minipage}\hfill
\begin{minipage}[t]{.325\linewidth}\centering
\subfloat[Complete]{\fbox{\includegraphics[trim=0bp 0bp 26bp 0bp,clip,width=\dimexpr\linewidth-2\fboxrule\relax]{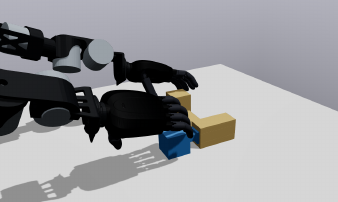}}}
\end{minipage}
\endgroup
\caption{\textbf{Assembly Sequence.} Both one-step selectors produce these states at friction \(\theta=0.7\). Physical finger contacts support acquisition and transport. Completion follows table support, hand withdrawal, and two seconds of stationary dwell.}
\label{fig:phases}
\end{figure}
\subsubsection{Comparison Protocol}\label{sec:results}
We compare posterior-updating and fixed-prior one-step selectors at \(\theta=0.35\) and 0.7 for at most 30 decisions. Across selectors, we hold the unacquired initial state, models, references, actions, guards, stopping rules, \(c\), and \(B_0\) fixed. At \(\theta=0.7\), we also evaluate a reference-tracking baseline at the selectors' completion time. For the baseline, we retain the plant, object-pose feedback, impedance law, and guards, but advance the reference within each mode and hold during dwell.

\subsubsection{Assembly Completion and Accuracy}
At \(\theta=0.7\), we obtain identical trajectories with both selectors: the robot maintains finger--object contact from lift through insertion and reaches \(\A\) (\cref{fig:phases}). Our selected actions require zero slack, and \(V\) decreases from 9.75 to zero. We identify friction from acquisition observations.

After insertion, the baseline continues reference advancement while our selectors choose clearance feedback. This feedback drives hand withdrawal while preserving the assembled part pose. At the selectors' completion time, we measure 96.4\% lower infinity-norm position error than the baseline and zero residual hand--housing load (\cref{fig:example,tab:assembly-results}).

\begin{table}[t]
\centering
\caption{Assembly Outcomes at \(\theta=0.7\). We evaluate all methods at 30.96~s, when the robot reaches \(\A\) under either selector with identical endpoints. Mint denotes selection; salmon denotes the reference-tracking baseline.}
\label{tab:assembly-results}
\footnotesize
\begin{tabular}{lrr}\toprule
Quantity & Selector & Baseline\\\midrule
Final progress, \(V\) & 0 & 3.94\\
Position error, \(\|p_{HP}-p_*\|_\infty\) (mm) & \cellcolor{oursmint}0.188 & \cellcolor{losssalmon}5.20\\
Rotation error (rad) & 0.00194 & 0.0453\\
Hand clearance (mm) & \cellcolor{oursmint}7.14 & \cellcolor{losssalmon}0\\
\textbf{Hand--housing load}, \(L_H\) (N) & \cellcolor{oursmint}\textbf{0} & \cellcolor{losssalmon}\textbf{0.704}\\
Total table support (N) & 4.61 & 4.30\\
Qualified terminal dwell (s) & \cellcolor{oursmint}2 & \cellcolor{losssalmon}0\\\bottomrule
\end{tabular}
\end{table}

\subsubsection{Belief Refinement and Action Admissibility}
At \(\theta=0.35\), we identify friction from acquisition observations, and both executions reach the same physical state in alignment mode. There, the minimum posterior slack requirement, 0.1143, exceeds the remaining 0.0936 (\cref{tab:admissibility}). Our posterior-updating selector therefore detects an empty admissible set. The fixed prior instead admits holding, tightening, and relaxation, and selects tightening. We obtain \(V'\approx5.7728\); because observations have identified friction and the transition is deterministic, this realized value equals the conditional mean. It exceeds \(\rho V+\varepsilon\) by 0.0543, violating \eqref{eq:drift}.

\begin{table}[t]
\centering
\caption{Action Evaluation at \(\theta=0.35\), \(V\approx5.6641\), and \(c=10^{-3}\). Remaining slack is 0.0936 (posterior) and 0.09318 (fixed prior). Mint and salmon distinguish their tightening evaluations.}
\label{tab:admissibility}
\footnotesize
\setlength{\tabcolsep}{3pt}
\begin{tabular}{lrrrr}\toprule
& \multicolumn{2}{c}{Endpoint mean, \(\mu\)} & \multicolumn{2}{c}{Required slack, \(e\)}\\
Action & Posterior & Fixed prior & Posterior & Fixed prior\\\midrule
Hold & 5.7805 & 5.7221 & 0.1220 & 0.0636\\
Advance & 8.9906 & 6.8842 & 3.3321 & 1.2258\\
Tighten & \cellcolor{oursmint}5.7728 & \cellcolor{losssalmon}5.7184 & \cellcolor{oursmint}0.1143 & \cellcolor{losssalmon}0.0600\\
Relax & 5.7900 & 5.7267 & 0.1315 & 0.0682\\\bottomrule
\end{tabular}
\end{table}

\subsection{Discussion}
In the planar-regulation example (\cref{sec:planar}), we use bounded slack for an informative action that enables subsequent contracting feedback. In the assembly example (\cref{sec:assembly}), our accuracy improvement follows from selecting clearance feedback after insertion, which commands hand withdrawal while preserving the assembled pose. We use the friction posterior to compute each action's expected endpoint value of \(V\). By conditioning on acquisition observations, we account for the slack that fixed-prior prediction underestimates at lower friction.

\section{Conclusion}\label{sec:conclusion}
We select hybrid feedback actions using posterior-dependent lookahead and an expected-decrease constraint with bounded cumulative slack. Under the stated prediction, comparison, and recursive-feasibility conditions, we prove almost-sure target-set convergence at decision times and bounds on summed expected progress-function values and expected neighborhood-entry time. Planar regulation verifies the hypotheses and demonstrates informative lookahead. In simulated assembly, one-step selection reduces endpoint position error through clearance feedback, while posterior conditioning detects loss of admissibility at lower friction.

\section*{Declaration of AI Use} We used generative AI tools for targeted technical code assistance, literature scoping, author-generated figure refinement, and selected manuscript text editing. The authors reviewed and verified the resulting material.

\bibliographystyle{ieeetr}
\bibliography{refs}
\end{document}